\newcommand{\keywords}[1]{\par\addvspace\baselineskip
\noindent\keywordname\enspace\ignorespaces#1}
\let\emptyset\varnothing
\begin{document}

\newcommand{\Qa}{\hbox to 12 pt {$^Q$\hss $A$}}

\newtheorem{thm}{Theorem}
\newtheorem{lm}{Lemma}
\newtheorem{cor}{Corollary}
\newtheorem{prop*}{Proposition}
\newtheorem{rem}[thm]{Remark}
\newtheorem{df}{Definition}
\mainmatter  

\title{Note on level $r$ consensus}
\titlerunning{Note on level $r$ consensus}  
%
\author{Nikolay~L.~Poliakov\inst{} }
\authorrunning{N.~L.~Poliakov} 
%
%
\institute{Financial University, Moscow, Russian Federation,\\
\mailsa}

\maketitle              

\begin{abstract} We show that the hierarchy of level $r$ consensus partially collapses. In particular, any
profile $\pi\in \mathcal{P}$ that exhibits consensus of level
$(K-1)!$ around $\succ_0$ in fact exhibits consensus of level $1$
around $\succ_0$.

\keywords{social choice theory, level $r$ consensus, scoring rules,
Mahonian numbers}
\end{abstract}

The concept of level $r$ consensus was introduced  in \cite{LevelR}
in the context of the metric approach in social choice theory. We
will mainly use the notation and definitions of \cite{LevelR}. Let
$A=\{1,2,\ldots, K\}$ be a set of $K > 2$ alternatives and let $N =
\{1, 2, \ldots, n\}$ be a set of individuals. Each linear order
(i.e. complete, transitive and antisymmetric binary relation) on the
set $A$ is called a \emph{preference relation}. The set of all
preference relations is denoted by $\mathcal{P}$. The
\emph{inversion metric} is the function $d:\mathcal{P}\times
\mathcal{P}\to \mathbb{R}$ defined by
$$
d(\succ,
\succ')=\frac{|(\succ\setminus\succ')\cup(\succ'\setminus\succ)|}{2}
$$
(since all preference relations in $\mathcal{P}$ have the same
cardinality we have also: \mbox{$d(\succ, \succ')=$}
$|\succ\setminus\succ'|=|\succ'\setminus\succ|$).
\par Let $\succ_0$ be a preference relation in $\mathcal{P}$.
A metric on $\mathcal{P}$ allows to determine which one of any two
preference relations is closer to a third one.  This comparison can
be extended to equal-sized sets of preferences.

\begin{df} Let $C$ and $C'$ be two disjoint nonempty subsets of $\mathcal{P}$ with
the same cardinality, and let  $\succ_0\in \mathcal{P}$ be a
preference relation on $A$. We say that $C$ is at least as close to
$\succ_0$ as $C'$, denoted by $C\geq_{\succ_0}C'$, if there is a
one-to-one function $\phi:C\to C'$ such that for all $\succ\in C$,
$d(\succ, \succ_0)\leq d(\phi(\succ), \succ_0)$. We also say that
$C$ is closer than $C'$ to $\succ_0$, denoted by $C>_{\succ_0}C'$,
if there is a one to one function $\phi:C\to C'$ such that for all
$\succ\in C$, $d(\succ, \succ_0)\leq d(\phi(\succ), \succ_0)$, with
strict inequality for at least one $\succ\in C$.
\end{df}

\par Using the concept of closeness the authors define the correspondence between
preference profiles $\pi\in \mathcal{P}^n$ and preference relations
$\succ\in \mathcal{P}$ depending on a natural parameter $r$ called
``\emph{preference profile $\pi$ exhibits consensus of level $r$
around $\succ$}''.
\par For any $\pi=(\succ_1, \succ_2, \ldots, \succ_n)\in
\mathcal{P}^n$, $\succ\in \mathcal{P}$, and $C\subseteq \mathcal{P}$
\begin{align*}
\mu_\pi(\succ)=|\{i\in \mathbb{N}:
\succ_i=\succ\}|,\,\,\,\mu_\pi(C)=|\{i\in \mathbb{N}: \succ_i\in
C\}|
\end{align*}
(obviously, $ \mu_\pi(C)=\sum_{\succ\in C}\mu_\pi(\succ)$).
\begin{df} Let $r\in \{1, 2, \ldots, \frac{K!}{2}\}$, and let $\succ_0\in \mathcal{P}$. A preference
profile $\pi\in \mathcal{P}^n$ exhibits consensus of level $r$
around $\succ_0$ if
\begin{enumerate}
\item for all disjoint subsets $C, C'$ of $\mathcal{P}$ with cardinality $r$,
$C\geq_{\succ_0}C'\rightarrow \mu_\pi(C)\geq \mu_\pi(C')$
\item there are disjoint subsets $C, C'$ of $\mathcal{P}$ with cardinality
$r$, such that $C>_{\succ_0}C'$ and $\mu_\pi(C)> \mu_\pi(C')$.
\end{enumerate}
\end{df}
\par Proposition 1 of \cite{LevelR} states that the set of profiles
that exhibit consensus of level $r+1$ around $\succ_0$ extends the
set of profiles that exhibit consensus of level $r$ around
$\succ_0$. Thus, each preference relation $\succ_0$ determines the
hierarchy of preference profiles.
\par Let a preference profile $\pi$ exhibit consensus of level $r$ around
$\succ_0$. We call $\succ_0$ a \emph{level $r$ consensus relation of
$\pi$} and simply \emph{consensus relation of $\pi$} if
$r=\frac{K!}{2}$ (the level $\frac{K!}{2}$ is the maximum level for
which this concept is nontrivial).
\par A level $r$ consensus relation $\succ_0$ of profile $\pi$ may be considered as one of
probable social binary relations on the profile $\pi$. Theorem 1 of
\cite{LevelR} states that if $n$ is odd, then each profile $\pi$
have at most one consensus relation $\succ_0$ and the consensus
relation $\succ_0$ coincides with the relation $M_{\pi}$ assigned by
the majority rule to $\pi$. This result gives an interesting
sufficient condition for  transitivity of $M_{\pi}$. Furthermore,
regardless of parity of $n$, the $\succ_0$-largest element $a_1$ is
a \emph{Condorcet winner} on $\pi$.
\par For small values of $r$, level $r$ consensus relations $\succ_0$ of profile $\pi$
have some interesting additional properties. Namely, the largest
element $a_1$ with respect $\succ_0$ is selected by any scoring
rule. A \emph{scoring rule} is characterized by a non-increasing
sequence $S=(S_1, S_2, \ldots, S_K)$ of non-negative real numbers
for which $S_1> S_K$. For $k=1,2,\ldots, K$, each individual with
the preference relation $\succ$ assigns $S_{k}$ points to the $k$-th
alternative in the linear order $\succ$. The scoring rule associated
with $S$ is the function $V_S:\mathcal{P}^n\to 2^A$ whose value at
any profile $\pi=\{\succ_1, \succ_2, \ldots, \succ_n\}$ is the set
$V_S(\pi)$ of alternatives~$a$ with the maximum total score (i.e.
with the maximum sum $\sum_{1\leq i\leq K} S_{k_i}$ where $k_i$ is
the rank of $a$ in $\succ_i$). Theorem 2 in \cite{LevelR} claims
that if a preference profile $\pi$ exhibits consensus of level
$r\leq (K-1)!$ around $\succ_0$, then the $\succ_0$-largest element
$a_1$ belongs to $V_S(\pi)$ for all scoring rules $V_S$.
\par However, the authors did not notice some combinatorial properties of the concepts introduced.
We show that the hierarchy of preference profile partially
collapses. In particular, any profile $\pi\in \mathcal{P}$ that
exhibits consensus of level $(K-1)!$ around $\succ_0$ in fact
exhibits consensus of level $1$ around $\succ_0$. Thus, it would be
desirable to slightly adjust the assumption of Theorem 2 of
\cite{LevelR}.


\begin{thm} For any natural number $K> 2$ there is a
natural number \mbox{$c\leq \frac{K(K-1)}{4}$} such that for any
natural numbers $n\geq 1$ and $r\in\{1,2,\ldots, \frac{K!}{2}-c\}$,
any preference profile $\pi\in \mathcal{P}^n$, and any linear order
$\succ_0\in \mathcal{P}$ the following conditions are equivalent
\begin{enumerate}
\item $\pi\,\text{exhibits consensus of level $r$ around $\succ_0$}$
\item $\pi\,\text{exhibits consensus of level $1$
around $\succ_0$}$.
\end{enumerate}
\end{thm}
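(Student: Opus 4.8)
The plan is to prove the nontrivial implication $(1)\Rightarrow(2)$; the converse $(2)\Rightarrow(1)$ follows by iterating Proposition~1 of \cite{LevelR}, which embeds the level~$1$ profiles into every higher level. Throughout I would identify $\mathcal P$ with the symmetric group by declaring $\succ_0$ to be the identity, so that $d_0(\succ):=d(\succ,\succ_0)$ is the number of inversions of $\succ$ and the count $m_j=|\{\succ:d_0(\succ)=j\}|$ of relations at distance $j$ is the $j$-th Mahonian number, $j=0,\dots,\binom K2$. The first, routine, step is to record the matching criterion: $C\ge_{\succ_0}C'$ holds exactly when, after sorting the two distance multisets $\{d_0(\succ):\succ\in C\}$ and $\{d_0(\succ):\succ\in C'\}$ increasingly, the former is dominated termwise by the latter. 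This is the usual condition for the existence of the injection $\phi$ in the Definition, and it lets me manufacture dominations at will.

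Reading off level~$1$: its condition~$1$ is equivalent to $\mu_\pi$ being a non-increasing function of $d_0$, and, given condition~$1$, its condition~$2$ is equivalent to that function being non-constant. So for $(1)\Rightarrow(2)$ I must show that level~$r$ forces $\mu_\pi$ to factor through $d_0$ as a non-increasing, non-constant function, which I would do in two stages. \emph{Stage A (constancy on levels).} For $x,y$ with $d_0(x)=d_0(y)$, I pick a spare equal-distance pair $p,q$ (and, for $r>2$, a common equal-profile ballast) and form the crossed dominations $\{x,p\}\ge_{\succ_0}\{y,q\}$, $\{x,q\}\ge_{\succ_0}\{y,p\}$ together with their reverses; adding the four resulting inequalities makes every ballast term cancel and leaves $2\mu_\pi(x)=2\mu_\pi(y)$, so $\mu_\pi=f\circ d_0$. \emph{Stage B (monotonicity).} Once $\mu_\pi=f\circ d_0$, to compare levels $j<j'$ I take $x$ at level $j$, $y$ at level $j'$, and attach to each a ballast set $B,B'$ of size $r-1$ that is disjoint, avoids $\{x,y\}$, and carries the \emph{same} distance profile, so that $\mu_\pi(B)=\mu_\pi(B')$ holds automatically; then $\{x\}\cup B\ge_{\succ_0}\{y\}\cup B'$ gives $f(j)\ge f(j')$. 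When the two levels are so large that no common ballast survives, I instead compare whole levels, pitting all relations at level $j$ against all at level $j'$ (augmented by matched ballast), to obtain $m\,f(j)\ge m\,f(j')$ directly. Finally, level~$r$ condition~$2$ furnishes disjoint $C>_{\succ_0}C'$ with $\mu_\pi(C)>\mu_\pi(C')$, which is impossible for constant $f$; hence $f$ is non-constant and level~$1$ condition~$2$ holds.

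The crux, and the only place the size bound enters, is the feasibility of the canceling configurations: each construction consumes $x,y$ and must then split the remaining relations, level by level, into disjoint equal-distance pairs of the prescribed total size. The number of such pairs available is $\sum_j\lfloor a_j/2\rfloor$, where $a_j=m_j$ minus the at most two relations removed for $x,y$, and $\sum_j\lfloor m_j/2\rfloor=\tfrac12\bigl(K!-\#\{j:m_j\text{ odd}\}\bigr)$. Thus every construction runs as long as $r$ stays below $K!/2$ by a margin governed by the number of odd Mahonian numbers plus an $O(1)$ boundary loss.

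The main obstacle I anticipate is exactly this parity bookkeeping. I must show that the odd-$m_j$ count together with the removal loss never exceeds $\tfrac{K(K-1)}{4}$, exploiting the symmetry $m_j=m_{\binom K2-j}$ and the unimodality of the Mahonian numbers, and I must dispatch the handful of near-maximal levels, where equal-profile ballast is scarce, by the whole-level comparison of Stage~B rather than by padding. Checking that these two mechanisms jointly cover every $r\le \tfrac{K!}{2}-c$ for a single $c\le\tfrac{K(K-1)}{4}$ is the technical heart of the argument; the algebraic cancellations of Stages~A and~B are, by contrast, purely formal once the requisite disjoint relations are produced.
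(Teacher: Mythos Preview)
Your plan is correct and rests on the same mechanism as the paper: pad the two orders to be compared with equal-distance-profile (``balanced'') companion sets so that the level-$r$ inequalities collapse to a comparison of $\mu_\pi$ at two individual relations, with the feasible padding size governed by $\sum_j\lfloor m_j/2\rfloor=\tfrac12\bigl(K!-c'\bigr)$, where $c'$ counts the odd Mahonian numbers; hence $c=c'/2$.

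The paper's organisation is leaner in two ways that dissolve precisely the obstacles you flag. First, it invokes Proposition~1 of \cite{LevelR} at the outset to jump from the given level~$r$ to the single fixed level $K!/2-c$; thus it never has to manufacture ballast of variable size $r-1$, and your ``whole-level comparison'' fallback for near-maximal $r$ becomes unnecessary. Second, it does not separate your Stages~A and~B: it builds one balanced pair $(C_1,C_2)$ of size $K!/2-c$ with $\succ_2\in C_1$, $\succ_1\in C_2$, gets $\mu_\pi(C_1)=\mu_\pi(C_2)$ directly from the double domination $C_1\ge_{\succ_0}C_2$ and $C_2\ge_{\succ_0}C_1$ (so no prior ``constancy on levels'' is needed), and then a single swap of $\succ_1,\succ_2$ yields $\mu_\pi(\succ_1)\ge\mu_\pi(\succ_2)$ in one stroke, covering the equal-distance and strict-distance cases uniformly. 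Because the odd-level elements are discarded \emph{before} placing $\succ_1,\succ_2$ (using $T(K,k)\ge 2$ for $1\le k\le \binom{K}{2}-1$), there is no extra ``removal loss'' beyond $c'$, and the parity bookkeeping reduces to the one-line remark that $c'$ is even since $\sum_j m_j=K!$. One point your sketch omits: the extreme orders $\succ_0$ and $\overline{\succ}_0$ sit alone on levels of size~$1$ and need a short separate argument, which the paper supplies.
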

\begin{proof} The implication $2\to 1$ follows from Proposition 1 of
\cite{LevelR}. We will prove the reverse implication. Let $\succ_0$
be a linear order in $\mathcal{P}$ and  let
$$
\mathcal{P}_k(\succ_0)=\{\succ\in \mathcal{P}: d(\succ,
\succ_0)=k\}.
$$
for any natural number $k$. Obviously, $|\mathcal{P}_k(\succ_0)|$
coincides with the number of permutations of $\{1,2,\ldots, K\}$
with $k$ inversions, i.e. with the \emph{Mahonian number} $T(K,k)$
(sequence A008302 in OEIS, see \cite{A008302}). The set
$\mathcal{P}_{\frac{K(K-1)}{2}}$ contains exactly one element. We
denote this element by $\overline{\succ}_0$:
$\mathcal{P}_{\frac{K(K-1)}{2}}=\{\overline{\succ}_0\}$.
\par Let $c'$ be the number of $k$ for which $T(K,k)$ is odd:
$$
c'=|\{k\in \mathbb{N}:  T(K,k)\equiv 1\pmod 2\}|.
$$
So, $c'\leq \frac{K(K-1)}{2}$ because $\frac{K(K-1)}{2}$ is the
maximum distance between the linear orders in $\mathcal{P}$.
Moreover, $c'$ is even because
$$
\sum_{0\leq k\leq \frac{K(K-1)}{2}}T(K,k)=K!\equiv 0\pmod 2.
$$
\par Let $c=\frac{c'}{2}$. Then the inequality \mbox{$c\leq \frac{K(K-1)}{4}$} holds.
\begin{df} For any natural number $m$ a pair $(C_1, C_2)\in 2^\mathcal{P}\times
2^\mathcal{P}$ is called \emph{$m$-balanced (around $\succ_0$)} iff
\begin{enumerate}
\item $C_1\cap C_2=\emptyset$,
\item $|C_1|=|C_2|=m$,
\item $|C_1\cap \mathcal{P}_k(\succ_0)|=|C_2\cap
\mathcal{P}_k(\succ_0)|$ for any $k=0,1,\ldots, \frac{K(K-1)}{2}$.
\end{enumerate}
\end{df}

\begin{lm} Let $\succ_1, \succ_2\in \mathcal{P}\setminus \{\succ_0, \overline{\succ}_0\}$ and $\succ_1\neq \succ_2$. Then there is a $(\frac{K!}{2}-c)$-balanced pair
$(C_1, C_2)$ for which $\succ_1\in C_1$ and $\succ_2\in C_2$.
\end{lm}
\begin{proof} Note that $T(K,k)\geq 2$ for any $k\in \{1, 2, \ldots, \frac{K(K-1)}{2}-1\}$ (this follows,
for example, from a recurrence formula for $T(K,k)$, see
\cite{A008302}). Using this fact, for each $k\in \{k\in
\mathbb{N}:T(K,k)\equiv 1\pmod 2 \}$ choose a preference relation
$\succ_{(k)}\in \mathcal{P}_k(\succ_0)\setminus\{\succ_1,
\succ_2\}$. Let
$$
\mathcal{P}'_k(\succ_0)=
\begin{cases}\mathcal{P}_k(\succ_0) \,\,\,&\text{if $T(K,k)\equiv 0$,}
\\
\mathcal{P}_k(\succ_0)\setminus\{\succ_{(k)}\} \,\,\,&\text{if
$T(K,k)\equiv 1$}
\end{cases}
\pmod 2.
$$
For each $k\in\{1, \ldots, \frac{K(K-1)}{2}-1\}$ choose a set
$C_{(k)}$ with properties
\begin{enumerate}
\item $C_{(k)}\subseteq \mathcal{P}'_k(\succ_0)$,
\item $|C_{(k)}|=\frac{|\mathcal{P}'_k(\succ_0)|}{2}$,
\item $d(\succ_1, \succ_0)=k\to \succ_1\in C_{(k)}$,
\item $\succ_2\notin C_{(k)}$.
\end{enumerate}
Let
$$
C_1= \bigcup_{1\leq k\leq\frac{K(K-1)}{2}-1}
C_{(k)}\,\,\text{and}\,\,C_2= \bigcup_{1\leq
k\leq\frac{K(K-1)}{2}-1}\mathcal{P}'_k(\succ_0)\setminus C_{(k)}.
$$
\par Obviously, items 1\--- 3 of Definition 3 hold. Lemma 2 is proved.
\end{proof}

\begin{lm}For any natural number $m$ and $m$-balanced pair $(C_1, C_2)$ there is a one-to-one function
$\phi:C_1\to C_2$ satisfying
$$
d(\succ, \succ_0)=d(\phi(\succ), \succ_0)
$$
for all $\succ \in C_1$.
\end{lm}
\begin{proof} By item 3 of Definition 3 for any $k=0,1,\ldots, \frac{K(K-1)}{2}$ there
is a one-to-one mappings $\phi_k:C_1\cap \mathcal{P}_k(\succ_0)\to
C_2\cap \mathcal{P}_k(\succ_0)$ (maybe empty if $C_1\cap
\mathcal{P}_k(\succ_0)=\emptyset$). Obviously, we can put
$\phi=\bigcup_{0\leq i\leq \frac{K(K-1)}{2}}\phi_k$. Lemma 3 is
proved.
\end{proof}
\begin{cor} For any natural number $m$ and $m$-balanced pair $(C_1, C_2)$
\begin{equation*}
C_1\geq_{\succ_0} C_2\,\,\text{and}\,\,C_2\geq_{\succ_0} C_1.
\end{equation*}
\end{cor}
\begin{proof} Let $\phi$ be a function from Lemma 2. Then
$$
d(\succ, \succ_0)=d(\phi^{-1}(\succ), \succ_0)
$$
for all $\succ \in C_2$, and it remains to recall Definition 1.
\end{proof}
\par Let $\pi\in\mathcal{P}^n$ and let $\pi$ exhibit consensus of level $r\in \{1,2,\ldots, \frac{K!}{2}-c\}$
around~$\succ_0$. By Proposition 1 of \cite{LevelR} $\pi$ exhibits
consensus of level $\frac{K!}{2}-c$ around~$\succ_0$. Our next goal
is to prove that item 1 of Definition 2 holds for the profile~$\pi$
and~$r=1$.

\begin{lm} For any different $\succ_1, \succ_2\in \mathcal{P}$
$$
d(\succ_1, \succ_0)\leq d(\succ_2, \succ_0)\to
\mu_\pi(\succ_1)\geq\mu_\pi(\succ_2).
$$
\end{lm}
\begin{proof}
\par Let $\succ_1, \succ_2\in \mathcal{P}$, $\succ_1\neq \succ_2$ and $d(\succ_1, \succ_0)\leq d(\succ_2, \succ_0)$.
\par First, let $\{\succ_1, \succ_2\}\cap \{\succ_0, \overline{\succ}_0\}=\emptyset$. Consider a $(\frac{K!}{2}-c)$-balanced pair $(C_1, C_2)$ for which
$\succ_2\in C_1$ and $\succ_1\in C_2$, and a on-to-one function
$\phi:C_1\to C_2$ satisfying
$$
d(\succ, \succ_0)=d(\phi(\succ), \succ_0)
$$
for all $\succ\in C_1$. By Definition 2 and Corollary 3 we have
\begin{equation}
\mu_{\pi}(C_1)=\mu_{\pi}(C_2).
\end{equation}

\par Let $C'_1=(C_1\setminus \{\succ_2\})\cup\{\succ_1\}$ and
$C'_2=(C_2\setminus \{\succ_1\})\cup\{\succ_2\}$. Consider the
function $\phi': C'_1\to C'_2$ defined by
$$
\phi'(\succ)=
\begin{cases}
\succ_2\,\,&\text{if $\succ = \succ_1$},
\\
\phi(\succ_2)\,\,&\text{if $\succ=\phi^{-1}(\succ_1)\neq \succ_2$},
\\
\phi(\succ)\,\,&\text{otherwise}.
\end{cases}
$$
For all $\succ\in C'_1$ we have $d(\succ, \succ_0)\leq
d(\phi'(\succ), \succ_0)$, so $C'_1\geq_{\succ_0} C'_2$ by
Definition 1. Hence, by Definition 2
\begin{equation}
\mu_\pi(C'_1)\geq \mu_\pi(C'_2).
\end{equation}

\par Since $(\forall C\subseteq \mathcal{P})\,\mu_{\pi}(C)=\sum_{\succ\in C}\mu_{\pi}(\succ)$,  we have
\begin{equation}
\mu_{\pi}(C'_1)=\mu_{\pi}(C_1)-\mu_{\pi}(\succ_2)+\mu_{\pi}(\succ_1)\,\,\text{and}\,\,\mu_{\pi}(C'_2)=\mu_{\pi}(C_2)-\mu_{\pi}(\succ_1)+\mu_{\pi}(\succ_2).
\end{equation}

Then by (1), (2) and (3)
\begin{equation*}
\mu_{\pi}(\succ_1)-\mu_{\pi}(\succ_2)\geq
\mu_{\pi}(\succ_2)-\mu_{\pi}(\succ_1),
\end{equation*}
and, finally,
$$
\mu_\pi(\succ_1)\geq \mu_\pi(\succ_2).
$$
\par For further discussion, note that this implies
\begin{equation}
d(\succ_1, \succ_0)= d(\succ_2, \succ_0)\to
\mu_\pi(\succ_1)=\mu_\pi(\succ_2).
\end{equation}
for all different $\succ_1, \succ_2\in \mathcal{P}$.
\par Consider the remaining cases.

\par Let $\succ_1=\succ_0$ and $\succ_2\neq \overline{\succ}_0$. Then
denote $C''_1=(C_1\setminus \{\succ_2\})\cup\{\succ_0\}$ and
$C''_2=(C_1\setminus \{\phi(\succ_2)\})\cup\{\succ_2\}$. Consider
the function $\phi'': C''_1\to C_2$ defined by
$$
\phi''(\succ)=
\begin{cases}
\succ_2\,\,&\text{if $\succ = \succ_0$},
\\
\phi(\succ)\,\,&\text{otherwise}.
\end{cases}
$$
For all $\succ\in C''_1$ we have $d(\succ, \succ_0)\leq
d(\phi''(\succ), \succ_0)$ and, further, $C''_1\geq_{\succ_0}
C''_2$. Reasoning as before we have
$$
\mu_\pi (\succ_0)-\mu_\pi (\succ_2)\geq \mu_\pi (\succ_2)-\mu_\pi
(\phi(\succ_2)).
$$
Since $ d(\succ_2, \succ_0)=d(\phi(\succ_2), \succ_0)$, we have
$\mu_\pi(\succ_2)=\mu_\pi(\phi(\succ_2))$ by (4). Finally,
$$\mu_\pi (\succ_0)\geq \mu_\pi (\succ_2).$$
\par In the case $\succ_2=\overline{\succ}_0$ and $\succ_1\neq\succ_0$, the arguments are similar.
\par In the latter case  $\succ_1=\succ_0$ and $\succ_2=\overline{\succ}_0$. We can choose a preference relation
$\succ^\ast\in \mathcal{P}\setminus \{\succ_0,
\overline{\succ}_0\}$. According to the above, we have
$$
\mu_\pi(\succ_1)\geq \mu_\pi(\succ^\ast)\geq \mu_\pi(\succ_2).
$$
\par Lemma 3 is proved.
\end{proof}
\par To prove the theorem it remains to show that item 2 of Definition 2 holds for the
profile~$\pi$ and $r=1$. Assume
$\mu_\pi(\overline{\succ}_0)=\emptyset$. Then, for every preference
relation $\succ$ of profile $\pi$ we have
$$
d(\succ, \succ_0)> d(\overline{\succ}_0,
\succ_0)\,\,\text{and}\,\,\mu_{\pi}(\succ)>\mu_{\pi}(\overline{\succ}_0).
$$
In the opposite case, assume that item 2 of Definition 2 is not hold
for the profile~$\pi$ and $r=1$. Then by Lemma 3 the profile $\pi$
contains the same number of all linear orders in $\mathcal{P}$.
Thus, $\pi$ does not exhibit consensus of any level, a
contradiction.
\par Theorem 1 is proved.
\end{proof}
\begin{cor} Let profile $\pi$ exhibit consensus of level $(K-1)!$ around
$\succ_0$. Then $\pi$ exhibits consensus of level $1$ around
$\succ_0$.
\end{cor}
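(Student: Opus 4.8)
The plan is to derive the corollary directly from Theorem 1 by checking that the level $(K-1)!$ lies within the range $\{1, 2, \ldots, \frac{K!}{2} - c\}$ for which the theorem establishes equivalence. Concretely, I would first observe that it suffices to verify the inequality $(K-1)! \leq \frac{K!}{2} - c$; once this holds, applying Theorem 1 with $r = (K-1)!$ immediately yields the equivalence of level $(K-1)!$ consensus and level $1$ consensus, and the implication asserted in the corollary is simply the direction $1 \to 2$ of that equivalence.

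Second, I would rewrite the target inequality as $c \leq \frac{K!}{2} - (K-1)! = (K-1)! \cdot \frac{K-2}{2}$ and feed in the bound $c \leq \frac{K(K-1)}{4}$ supplied by Theorem 1. For $K \geq 4$ it then suffices to check $\frac{K(K-1)}{4} \leq (K-1)! \cdot \frac{K-2}{2}$, equivalently $K(K-1) \leq 2(K-2)\,(K-1)!$, which holds for $K \geq 4$ and in fact acquires an enormous margin as $K$ grows, because of the factorial on the right.

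The one genuinely delicate point --- the main obstacle --- is the boundary case $K = 3$, where the crude estimate $c \leq \frac{K(K-1)}{4} = \tfrac{3}{2}$ is not by itself small enough to beat $\frac{K!}{2} - (K-1)! = 1$. Here I would invoke the integrality of $c$: since $c$ is a natural number with $c \leq \tfrac{3}{2}$, we have $c \leq 1$, and therefore $\frac{K!}{2} - c \geq 3 - 1 = 2 = (K-1)!$, exactly as required. Alternatively, one can compute directly that for $K = 3$ the Mahonian numbers are $1,2,2,1$, so precisely two of them are odd, giving $c' = 2$ and $c = 1$, whence $\frac{K!}{2} - c = 2 = (K-1)!$. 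With the inequality $(K-1)! \leq \frac{K!}{2} - c$ thus secured in all cases $K > 2$, Theorem 1 applies with $r = (K-1)!$ and the corollary follows.
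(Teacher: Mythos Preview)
Your proposal is correct and follows essentially the same route as the paper: reduce to verifying $(K-1)! \le \frac{K!}{2}-c$, use the bound $c\le \frac{K(K-1)}{4}$ for $K\ge 4$, and treat $K=3$ separately. The only noteworthy difference is your handling of $K=3$: the paper reaches into the proof of Theorem~1 to compute $c$ exactly via the Mahonian numbers (your second alternative), whereas your first alternative---using that $c$ is an integer with $c\le 3/2$, hence $c\le 1$---is a slightly cleaner argument that relies only on the \emph{statement} of Theorem~1.
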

\begin{proof} Let $K\geq 4$. Then it suffices to prove the inequality
$$
(K-1)!\leq \frac{K!}{2}-\frac{K(K-1)}{4}.
$$
This is easily by induction. For $K=3$ we can use the sufficiency of
inequality
$$
(K-1)!\leq \frac{K!}{2}-\frac{|\{k:T(K,k)=1\pmod 2\}|}{2}
$$
(for $K=3$ we have $|\{k:T(3,k)=1\pmod 2\}|=2$).
\end{proof}

\end{document}